\newtheorem{prop}{Proposition}[section]
\newtheorem{remark}{Remark}[section]
\title{\LARGE \bf Cooperative Control of TCSC to Relieve the Stress of Cyber-physical Power System}
\author{Chao~Zhai, Gaoxi~Xiao, Hehong Zhang and Tso-Chien Pan \thanks{Chao Zhai, Gaoxi Xiao, Hehong Zhang and Tso-Chien Pan are with Institute of Catastrophe Risk Management, Nanyang Technological University, 50 Nanyang Avenue, Singapore 639798. They are also with Future Resilient Systems, Singapore-ETH Centre, 1 Create Way, CREATE Tower, Singapore 138602. Chao Zhai, Gaoxi Xiao and Hehong Zhang are also with School of Electrical and Electronic Engineering, Nanyang Technological University. Corresponding author: Gaoxi Xiao. Email: {\tt\small egxxiao@ntu.edu.sg}}
}
\begin{document}

\maketitle
\thispagestyle{empty}
\pagestyle{empty}

\begin{abstract}
This paper addresses the cooperative control problem of Thyristor-Controlled Series Compensation (TCSC) to eliminate the stress of cyber-physical power system. The cyber-physical power system is composed of power network, protection and control center and communication network. A cooperative control algorithm of TCSC is developed to adjust the branch impedance and regulate the power flow. To reduce computation burdens, an approximate method is adopted to estimate the Jacobian matrix for the generation of control signals. In addition, a performance index is introduced to quantify the stress level of power system. Theoretical analysis is conducted to guarantee the convergence of performance index when the proposed cooperative control algorithm is implemented. Finally, numerical simulations are carried out to validate the cooperative control approach on IEEE 24 Bus Systems in uncertain environments.
\end{abstract}

\section{INTRODUCTION}\label{sec:int}
Due to huge economic loss, power system blackouts have become an issue of great concern to both electrical power industries and governments in the past several decades. Thus, a lot of efforts are taken to develop various protection schemes against the blackout. Some researchers are dedicated to the prevention of cascading failures when the cascade propagates in the early stage \cite{beg05,cz17}, while others focus on the identification of initial contingencies \cite{kim16,czm17}. Actually, most blackouts are closely related to the precondition of excessive power demand, which results in the stress of power system. For instance, more than 60\% blackouts take place in the summer and winter peaks when the power demand is relatively high \cite{lu06}. Therefore, it is significant to eliminate the undesired consequence of preconditions (e.g., power system stress due to branch overloads) and reduce the risk of cascading blackouts.

The application of flexible alternate current transmission systems (FACTS) greatly improves the performance of power system in terms of power oscillation damping and transient stability enhancement. As an important member in the FACTS family, Thyristor-Controlled Series Compensation (TCSC) plays a major part in the reliable operation of power transmission system to relieve the power system stress, provide the voltage support, schedule the power flow, etc \cite{xiao99}. As a result, various control schemes of TCSC have been proposed in the past decades, and they include PID control \cite{pan11}, Fuzzy logic control \cite{hiy95}, energy function method \cite{gro95}, auto-disturbance rejection control (ADRC) \cite{zhang99}, neural network based control \cite{dai98}, to name just a few. Nevertheless, most control schemes focus on the separate operation of TCSC controllers to improve the transient stability and dynamic performance of power system, which ignores the collaboration of TCSC controllers to achieve the better performance.

On the other side, cooperative control of multi-agent system has attracted much interest of researchers in the field of control and systems engineering \cite{cz13,cz16}. Essentially, cooperative control refers to control actions that aim to achieve the control goal through sharing the information of multiple components in a cooperative way. Actually, cooperative control is also applied to the power system protection by regarding each TCSC as an agent that is able to collaborate with each other for scheduling the power flow. Specifically, cooperative control allows the TCSC agents to work together for a common goal by reconciling the conflict of interest among individual TCSCs, which helps to achieve the desired performance in a shorter time. Moreover, it contributes to strengthening the capability of power system against malicious disturbances by absorbing the stress or damage in a systematic manner. Therefore, a cooperative control scheme is proposed in this paper to deal with the problem of power system stress. The main contributions of this work are listed as follows:
\begin{enumerate}
  \item Develop a cooperative control algorithm of TCSC with the guaranteed convergence of performance index in theory.
  \item Propose a simple and efficient approach to estimating the Jacobian matrix, which greatly reduces the computation burdens.
  \item Validate the cooperative control approach on the standard IEEE bus system with unknown load variations.
\end{enumerate}
The outline of this paper is organized as follows. Section \ref{sec:pre} presents the problem formulation of eliminating power system stresses with TCSC. Section \ref{sec:ana} provides the cooperative control algorithm and theoretical analysis. Simulations and validation on IEEE $24$ Bus System are conducted in Section \ref{sec:sim}. Finally, we conclude the paper and discuss future work in Section \ref{sec:con}.

\section{PROBLEM FORMULATION}\label{sec:pre}
\begin{figure}
\scalebox{0.28}[0.28]{\includegraphics{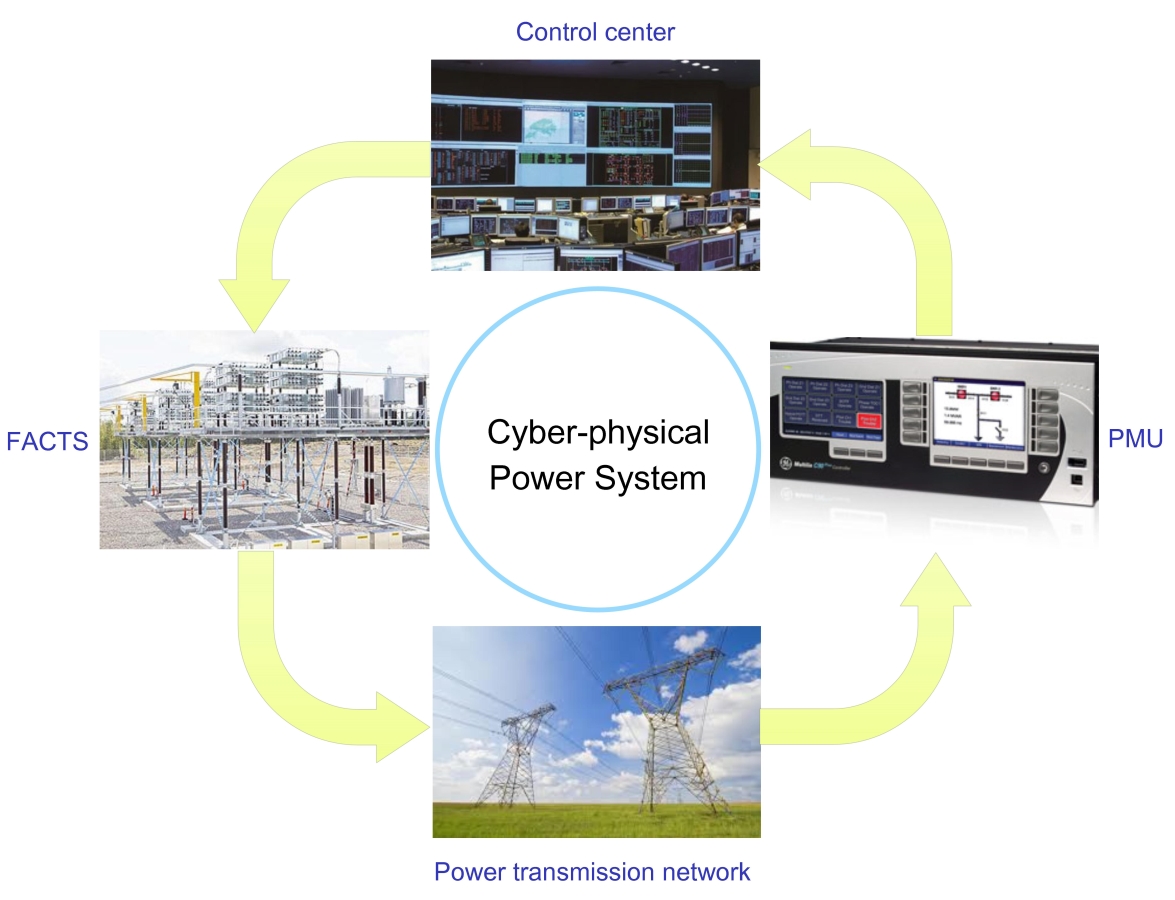}}\centering
\caption{\label{cps} Cyber-physical power system.}
\end{figure}
Figure \ref{cps} presents a schematic diagram of cyber-physical power system, which includes power transmission network, FACTS devices, phasor measurement unit (PUM) and control center. Specifically, the PMUs monitor the state of power system and transmit the state information to control center through communication networks. The control center capitalizes on the state information to generate the control signal for various actuators in power system. As a type of actuators, FACTS devices (e.g., TCSC) adjust the branch impedance according to the control signal.

Consider a power network with $m$ buses and $n$ branches. Let $Z\in C^n$ denote the vector of branch impedance in power system. $P_b\in C^m$ and $P_e=(P_{ij})\in C^n$ represent the vector of injected power on buses and the vector of transmission power on branches, respectively. For Bus $i\in\{1,2,...,m\}$, the AC power flow equation is given as follows.
\begin{equation}\label{pf_p}
Re(P_{b,i})=\sum_{j=1}^{m}|V_i||V_j|\left(G_{ij}\cos\theta_{ij}+B_{ij}\sin\theta_{ij}\right)
\end{equation}
and
\begin{equation}\label{pf_q}
Im(P_{b,i})=\sum_{j=1}^{m}|V_i||V_j|\left(G_{ij}\sin\theta_{ij}-B_{ij}\cos\theta_{ij}\right)
\end{equation}
where $V_i$ and $V_j$ denote the voltage of Bus $i$ and Bus $j$, respectively. $\theta _{ij}$ is the difference of voltage angle between Bus $i$ and Bus $j$. In addition, $G_{ij}$ and $B_{ij}$ refer to the conductance and susceptance of branch connecting Bus $i$ to Bus $j$, respectively. Then the transmission power on the above branch can be computed by
\begin{equation}\label{pf}
P_{ij}=|V_i-V_j|^2Y^*_{ij}
\end{equation}
where $Y^*_{ij}$ represents the the complex conjugate of branch admittance $Y_{ij}$. Suppose that there is the desired transmission power on each branch, denoted by the vector $\sigma=(\sigma_1,\sigma_2,...,\sigma_n)\in C^n$. When the actual transmission power deviates from the desired transmission power, the TSCS agents start to update the branch impedance in order to drive the actual transmission power towards the desired transmission power on each branch.
\begin{figure}
\scalebox{0.055}[0.055]{\includegraphics{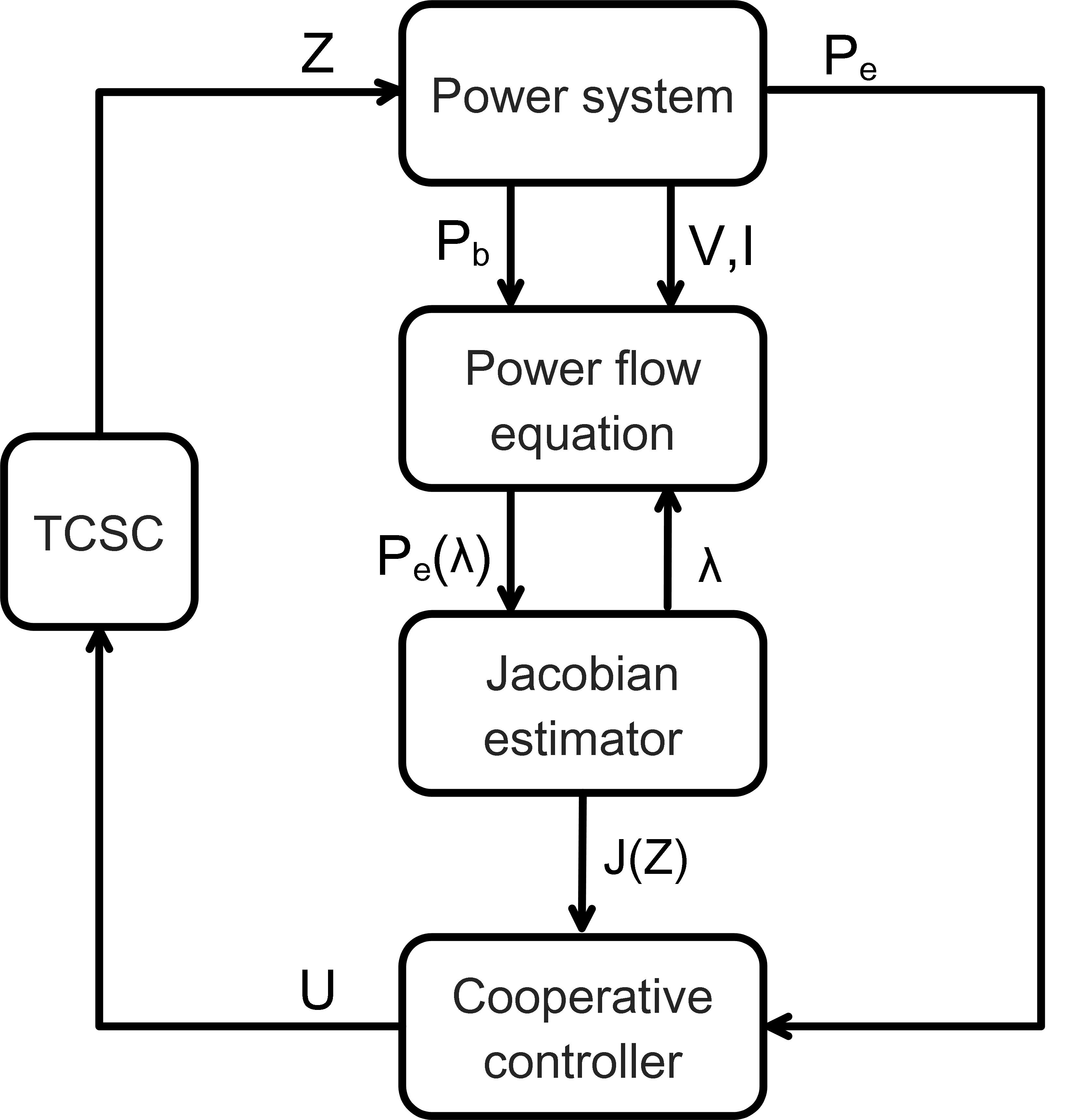}}\centering
\caption{\label{flow} Block diagram of information flow in cyber-physical power systems.}
\end{figure}

Essentially, the goal of this study is to design the control signal $U$ for TCSC in order to minimize the deviation of the actual transmission power from the desired transmission power.
In this way, the stress of power system can be eliminated. Thus, the optimization formulation is presented as follows.
\begin{equation}\label{min}
    \min_{U} H(Z)
\end{equation}
where the objective function is given by
\begin{equation}\label{obj}
    H(Z)=\|Re(P_e)-Re(\sigma)\|^2+\epsilon\|Im(P_e)-Im(\sigma)\|^2
\end{equation}
with the constant $\epsilon\in[0,1]$. The first term in (\ref{obj}) accounts for the mismatch between the actual active power and the desired active power on branches, and the second term describes the error of reactive powers. Additionally, the weight $\epsilon$ is used to quantify the significance of reactive power compared to the active power.

The time derivative of H(Z) along the dynamics of $Re(Z)$ and $Im(Z)$ is given by
\begin{equation}\label{dHY}
\begin{split}
\frac{dH(Z)}{dt}&=2\left(Re(P_e)-Re(\sigma)\right)^T\frac{dRe(P_e)}{dt}\\
                &+2\epsilon\left(Im(P_e)-Im(\sigma)\right)^T \frac{dIm(P_e)}{dt}
\end{split}
\end{equation}
with
\begin{equation}\label{dRe}
\frac{dRe(P_e)}{dt}=\frac{\partial Re(P_e)}{\partial Re(Z)}\cdot\frac{dRe(Z)}{dt}+\frac{\partial Re(P_e)}{\partial Im(Z)}\cdot\frac{dIm(Z)}{dt}
\end{equation}
and
\begin{equation}\label{dIm}
\frac{dIm(P_e)}{dt}=\frac{\partial Im(P_e)}{\partial Re(Z)}\cdot\frac{dRe(Z)}{dt}+\frac{\partial Im(P_e)}{\partial Im(Z)}\cdot\frac{dIm(Z)}{dt}
\end{equation}
By substituting (\ref{dRe}) and (\ref{dIm}) into (\ref{dHY}), we obtain
\begin{equation*}
     \begin{split}
         \frac{dH(Z)}{dt}&=2\left(Re(P_e)-Re(\sigma)\right)^T\left(\frac{\partial Re(P_e)}{\partial Re(Z)}\cdot\frac{dRe(Z)}{dt}\right) \\
          &+2\left(Re(P_e)-Re(\sigma)\right)^T\left(\frac{\partial Re(P_e)}{\partial Im(Z)}\cdot\frac{dIm(Z)}{dt}\right) \\
          &+2\epsilon\left(Im(P_e)-Im(\sigma)\right)^T\left(\frac{\partial Im(P_e)}{\partial Re(Z)}\cdot\frac{dRe(Z)}{dt}\right) \\
          &+2\epsilon\left(Im(P_e)-Im(\sigma)\right)^T\left(\frac{\partial Im(P_e)}{\partial Im(Z)}\cdot\frac{dIm(Z)}{dt}\right)\\
\end{split}
\end{equation*}
which is equivalent to
\begin{equation}\label{dHY1}
\begin{split}
\frac{dH(Z)}{dt}&=2\left(Re(P_e)-Re(\sigma)\right)^T\frac{\partial Re(P_e)}{\partial Re(Z)}\cdot\frac{dRe(Z)}{dt} \\
     &+2\epsilon\left(Im(P_e)-Im(\sigma)\right)^T\frac{\partial Im(P_e)}{\partial Re(Z)}\cdot\frac{dRe(Z)}{dt} \\
          &+2\left(Re(P_e)-Re(\sigma)\right)^T\frac{\partial Re(P_e)}{\partial Im(Z)}\cdot\frac{dIm(Z)}{dt} \\
          &+2\epsilon\left(Im(P_e)-Im(\sigma)\right)^T\frac{\partial Im(P_e)}{\partial Im(Z)}\cdot\frac{dIm(Z)}{dt} \\
          &=2\left[
               \begin{array}{c}
                Re(P_e)-Re(\sigma) \\
                \epsilon(Im(P_e)-Im(\sigma)) \\
               \end{array}
             \right]^TJ(Z)\left[
                               \begin{array}{c}
                                 \frac{dRe(Z)}{dt} \\
                                 \frac{dIm(Z)}{dt} \\
                               \end{array}
                             \right]
     \end{split}
\end{equation}
where the Jacobian matrix $J(Z)$ in (\ref{dHY1}) is denoted by
$$
J(Z)=\left[
                        \begin{array}{cc}
                          \frac{\partial Re(P_e)}{\partial Re(Z)} &  \frac{\partial Re(P_e)}{\partial Im(Z)} \\
                          \frac{\partial Im(P_e)}{\partial Re(Z)} &  \frac{\partial Im(P_e)}{\partial Im(Z)} \\
                        \end{array}
                      \right]=(J_{i,j}(Z))\in R^{2n\times2n}
$$

Let $U_{re}\in R^n$ and $U_{im}\in R^n$ denote the control signals to update the branch resistance and branch reactance, respectively. And the cooperative control input $U$
is composed of $U_{re}$ and $U_{im}$ as follows
\begin{equation}\label{odeu}
U=\left(\begin{array}{c}
    U_{re} \\
    U_{im}
  \end{array}
  \right)
\end{equation}
with
\begin{equation}\label{ode}
    \frac{dRe(Z)}{dt}=U_{re},\quad \frac{dIm(Z)}{dt}=U_{im}
\end{equation}

Figure \ref{flow} presents the information flow on the cooperative control of TCSC in cyber-physical power systems.
Specifically, PMUs detect the injected power $P_b$, the voltage $V$ and the injected current $I$ on buses and send these data to control centre, where the branch impedance can be identified and the power flow is computed according to Equations (\ref{pf_p}), (\ref{pf_q}) and (\ref{pf}). Then the Jacobian matrix is estimated by adding a tiny perturbation $\lambda$ on each branch in turns. Next, the cooperative controller integrates the Jacobian matrix with the error between the actual transmission power and the desired one to produce the control signal of TCSC. Finally, TSCSs adjust the branch impedance based on control inputs $U$ to schedule the power flow $P_e$.

\section{COOPERATIVE CONTROL ALGORITHM}\label{sec:ana}
This section presents the cooperative control algorithm of TSCS and the method of estimating the Jacobian matrix $J(Z)$, as mentioned in Section \ref{sec:pre}.
\subsection{Control law}
The cooperative controller of TCSC is designed as follows
\begin{equation}\label{control}
U=\left(
  \begin{array}{c}
    U_{re} \\
    U_{im} \\
  \end{array}
\right)=-\kappa(Z)\circ J(Z)^T \left[
               \begin{array}{c}
                Re(P_e)-Re(\sigma) \\
                \epsilon(Im(P_e)-Im(\sigma)) \\
               \end{array}
             \right],
\end{equation}
where the symbol $\circ$ denotes the Hadamard product, and $\kappa(Z)=(\kappa_1(Z),\kappa_2(Z),...,\kappa_{2n}(Z))^T$ is the nonnegative vector function.  Each element of $\kappa(Z)$ is given by
$$
\kappa_i(Z)=\left\{
              \begin{array}{ll}
                c, & \hbox{$Z_i\in[\underline{Z}_i,\bar{Z}_i]$;} \\
                0, & \hbox{$otherwise$.}
              \end{array}
            \right.
$$
with the positive constant $c$ and the lower bound $\underline{Z}_i$ and the upper bound $\bar{Z}_i$ for the impedance of Branch $i$. In theory, the proposed control algorithm can guarantee the asymptotical convergence of (\ref{obj}).
\begin{prop}
The control law (\ref{control}) ensures the convergence of Objective Function (\ref{obj}).
\end{prop}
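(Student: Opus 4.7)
The plan is to exhibit $H(Z)$ as a Lyapunov function for the closed-loop dynamics. The time derivative $\dot H$ has already been computed in (\ref{dHY1}); substituting the control law (\ref{control}) via (\ref{ode}) should reduce $\dot H$ to a manifestly non-positive quadratic form, from which convergence of $H(Z(t))$ follows by monotone convergence.

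First I would introduce the shorthand $e := (Re(P_e)-Re(\sigma),\, \epsilon(Im(P_e)-Im(\sigma)))^T$ so that (\ref{dHY1}) reads compactly as $\dot H = 2\, e^T J(Z)\, U$. Using (\ref{ode}) and (\ref{control}), this becomes $\dot H = -2\, e^T J(Z)\bigl(\kappa(Z)\circ J(Z)^T e\bigr)$. The decisive algebraic step is the Hadamard-product identity $a^T(\kappa\circ b) = \sum_i \kappa_i a_i b_i$: setting $v := J(Z)^T e$ and $a = b = v$ collapses the expression to
\[
\dot H = -2\sum_{i=1}^{2n} \kappa_i(Z)\, v_i^2.
\]
By the definition of $\kappa_i(Z)$ (either $c>0$ or $0$), every summand is non-negative, so $\dot H \leq 0$ along trajectories. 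Because $H\geq 0$ trivially, $H(Z(t))$ is bounded below and non-increasing, and the monotone convergence theorem yields a finite limit as $t\to\infty$.

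The main obstacle, though mild, is the Hadamard manipulation that folds the two factors of $J(Z)^T e$ into a weighted sum of squares; once that identity is correctly applied, non-positivity of $\dot H$ is immediate and no further structural assumptions on $J(Z)$ (symmetry, definiteness, etc.) are needed. A conceptual caveat worth recording is that this Lyapunov argument only delivers convergence of the \emph{value} $H(Z(t))$, not of the state $Z(t)$ itself; for convergence of $Z(t)$ to a stationary configuration one would need to invoke LaSalle's invariance principle together with a boundedness argument for the trajectory, which is outside the scope of the stated proposition. The saturating choice of $\kappa_i$ outside $[\underline{Z}_i,\bar Z_i]$ is also compatible with the argument, since it only sets further summands to zero and preserves $\dot H\leq 0$.
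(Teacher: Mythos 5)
Your proposal is correct and follows essentially the same route as the paper: substitute the control law into (\ref{dHY1}), use the Hadamard-product identity to rewrite $\dot H$ as $-2\sum_i \kappa_i(Z) v_i^2$ (the paper writes the identical quantity as $-2\|\bar\kappa(Z)\circ J(Z)^T e\|^2$ with $\bar\kappa_i=\sqrt{\kappa_i}$), and conclude by monotone convergence since $H\geq 0$. Your closing caveat that this gives convergence of the value $H(Z(t))$ but not of the state $Z(t)$ is a fair observation, but it does not change the fact that the argument matches the paper's.
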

\begin{proof}
Since $Re(Z)$ and $Im(Z)$ are updated according to the control law (\ref{control}), Equation (\ref{dHY1}) can be rewritten as
\begin{equation}\label{dHU}
    \frac{dH(Z)}{dt}=2\left[
               \begin{array}{c}
                Re(P_e)-Re(\sigma) \\
                \epsilon(Im(P_e)-Im(\sigma)) \\
               \end{array}
             \right]^T\left[
                        \begin{array}{cc}
                          \frac{\partial Re(P_e)}{\partial Re(Z)} &  \frac{\partial Re(P_e)}{\partial Im(Z)} \\
                          \frac{\partial Im(P_e)}{\partial Re(Z)} &  \frac{\partial Im(P_e)}{\partial Im(Z)} \\
                        \end{array}
                      \right]U
\end{equation}
By substituting (\ref{control}) into (\ref{dHU}), we obtain
\begin{equation*}
\begin{split}
\frac{dH(Z)}{dt}&=-2\left[
               \begin{array}{c}
                Re(P_e)-Re(\sigma) \\
                \epsilon(Im(P_e)-Im(\sigma)) \\
               \end{array}
             \right]^TJ(Z)\cdot\kappa(Z)~\circ \\
         &J(Z)^T\left[
               \begin{array}{c}
                Re(P_e)-Re(\sigma) \\
                \epsilon(Im(P_e)-Im(\sigma)) \\
               \end{array}
             \right]\\
&=-2\left\|\bar{\kappa}(Z)\circ J(Z)^T \left[
               \begin{array}{c}
                Re(P_e)-Re(\sigma) \\
                \epsilon(Im(P_e)-Im(\sigma)) \\
               \end{array}
             \right]\right\|^2\leq 0
\end{split}
\end{equation*}
with $\bar{\kappa}(Z)=(\sqrt{\kappa_1(Z)},\sqrt{\kappa_2(Z)},...,\sqrt{\kappa_{2n}(Z)})^T\in R^{2n}$. Thus, $H(Z)$ decreases monotonously as time goes to infinity. Moreover, it follows from $H(Z)\geq0$ that the convergence of $H(Z)$ is guaranteed. The proof is thus completed.
\end{proof}
\begin{figure}
\scalebox{0.038}[0.038]{\includegraphics{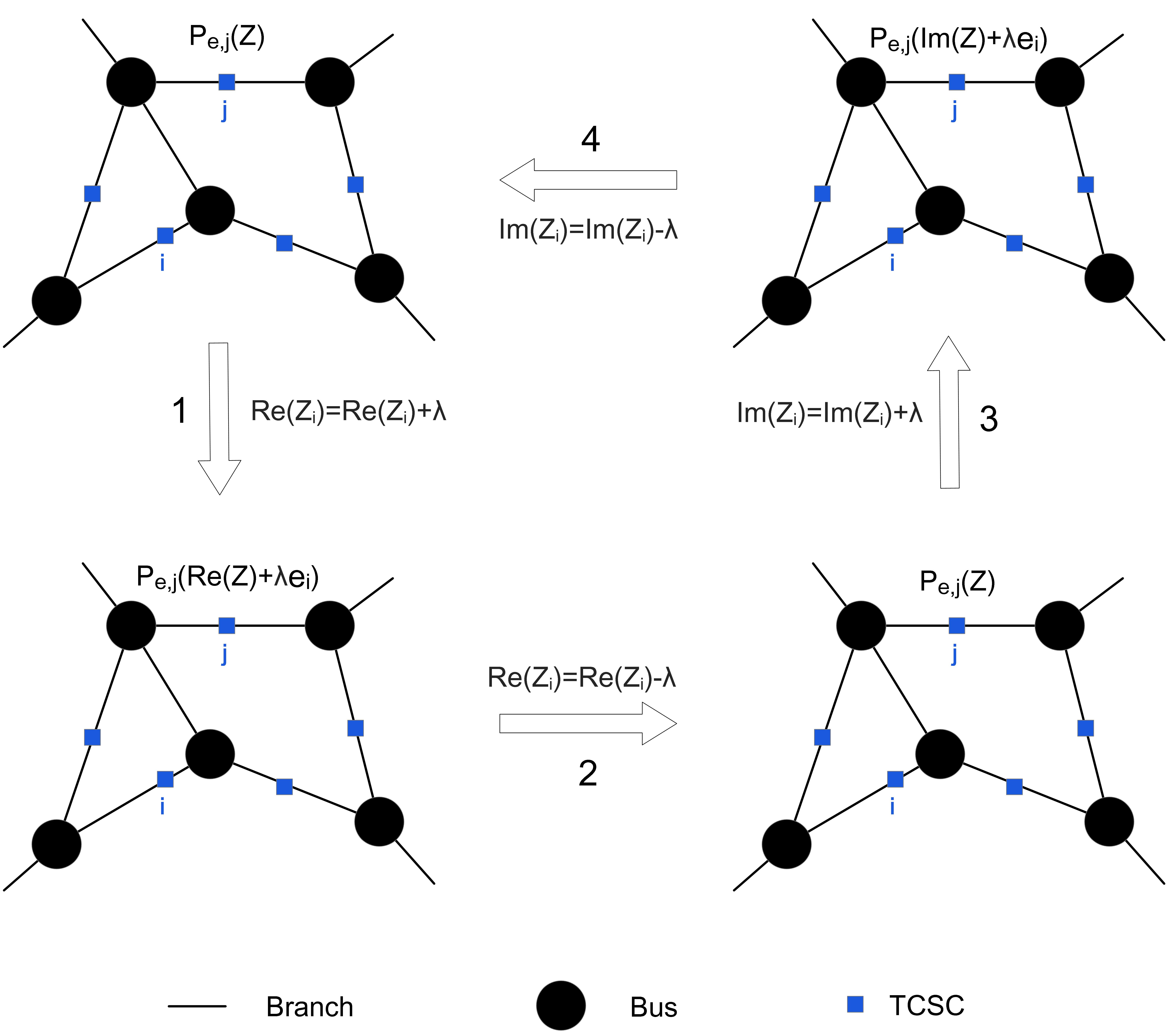}}\centering
\caption{\label{est} Estimation of the Jacobian matrix.}
\end{figure}

\subsection{Jacobian estimator}
In practice, it is difficult to implement the cooperative control algorithm (\ref{control}) in real time, because the accurate Jacobian matrix $J(Z)$ is not directly available. Thus, it is necessary to propose a numerical method to estimate the Jacobian matrix $J(Z)$ with low computation costs. Now, we introduce the approach to approximate the Jacobian matrix $J(Z)$, which includes $4$ steps (see Fig.~\ref{est}). First of all, we compute the transmission power $P_{e}(Z)$ on each branch with the power flow equation. Then Branch $i\in \{1,2,...,n\}$ is selected to increase its resistance by a sufficiently small value $\lambda$, and the resulted transmission power $P_{e}(Re(Z)+\lambda e_i)$ is obtained. Here $e_i$ denotes the $n$-dimensional unit vector with the $i$-th element being $1$ and all other elements being $0$. Thus, the elements in the $i$-th column of Jacobian matrix can be estimated as follows
\begin{equation}\label{ji}
J_{j,i}(Z)\approx\frac{Re(P_{e,j}(Re(Z)+\lambda e_i))-Re(P_{e,j}(Z))}{\lambda},
\end{equation}
and
\begin{equation}\label{jni}
J_{j+n,i}(Z)\approx\frac{Im(P_{e,j}(Re(Z)+\lambda e_i))-Im(P_{e,j}(Z))}{\lambda}.
\end{equation}
where $j\in \{1,2...,n\}$. Next, the resistance of Branch $i$ is restored to the original value. Afterwards, the reactance of the $i$-th branch is increased by $\lambda$, and the resulting transmission power $P_{e}(Im(Z)+\lambda e_i)$ is obtained.
The elements in the ($i+n$)-th column of the Jacobian matrix can be estimated as follows
\begin{equation}\label{jin}
J_{j,i+n}(Z)\approx\frac{Re(P_{e,j}(Im(Z)+\lambda e_i))-Re(P_{e,j}(Z))}{\lambda}
\end{equation}
and
\begin{equation}\label{jnin}
J_{j+n,i+n}(Z)\approx\frac{Im(P_{e,j}(Im(Z)+\lambda e_i))-Im(P_{e,j}(Z))}{\lambda}.
\end{equation}
where $j\in \{1,2...,n\}$. The approximated Jacobian matrix $J(Z)$ is available after implementing the above series of operations for each branch. The procedure of estimating the Jacobian matrix $J(Z)$ is summarized in Table~\ref{ejm}.
\begin{table}
\caption{\label{ejm} Jacobian Estimation Algorithm.}
\begin{center}
\begin{tabular}{lcl} \hline
\bf{Input}: $\lambda$ and $P_b$ ~~~~\bf{Output}: $J(Z)$ \\\hline
  1:~~Set $\lambda$ and detect $P_b$  \\
  2:~~Compute $P_e(Z)$ with (\ref{pf_p}), (\ref{pf_q}) and (\ref{pf}) \\
  3:~\bf{for}~$i=1$ to $n$ do \\
  4:~~~~~~$\texttt{Re}(Z)=\texttt{Re}(Z)+\lambda e_i$ \\
  5:~~~~~~Compute $P_e(Z)$ with (\ref{pf_p}), (\ref{pf_q}) and (\ref{pf}) \\
  6:~~~~~~Estimate $J_{j,i}(Z)$ with (\ref{ji}) and (\ref{jni}) \\
  7:~~~~~~$\texttt{Re}(Z)=\texttt{Re}(Z)-\lambda e_i$ \\
  8:~~~~~~$\texttt{Im}(Z)=\texttt{Im}(Z)+\lambda e_i$ \\
  9:~~~~~~Compute $P_e(Z)$ with (\ref{pf_p}), (\ref{pf_q}) and (\ref{pf}) \\
  10:~~~~~Estimate $J_{j,i+n}(Z)$ with (\ref{jin}) and (\ref{jnin}) \\
  11:~~~~~$\texttt{Im}(Z)=\texttt{Im}(Z)-\lambda e_i$ \\
  12:~\bf{end for} \\ \hline
\end{tabular}
\end{center}
\end{table}

\subsection{Implementation}
In order to further reduce the computation cost, it is unnecessary to update the estimated Jacobian matrix $J(Z)$ at each time step in practice. For this reason, we introduce the performance index:
\begin{equation}\label{index}
    S_k=\max_{i\in I(k)} H_i(Z),
\end{equation}
where $H_i(Z)$ denotes the value of Objective Function (\ref{obj}) at the $i$-th time step. Actually, $S_k$ represents the maximum value of Objective Function (\ref{obj}) in the interval $I(k)=[k-1)T+1,kT]$, $k\in\mathbb{Z}^+$, and $T$ refers to the number of time steps in the interval. When the condition $S_{k+1}\geq S_k$ holds, the proposed control algorithm fails to effectively decrease the mismatch between the actual transmission power and the desired one. At this time, the Jacobian matrix should be updated to adjust the evolution direction.

\begin{table}
\caption{\label{cca} Cooperative Control Algorithm.}
\begin{center}
\begin{tabular}{lcl} \hline
\bf{Input}: $s=0$, $k=0$ \texttt{and} $S_{0}=H_0(Z)$ \\
\bf{Output}: $Z$ \texttt{and} $P_e$ \\\hline
  1:~\bf{while}~($H_s(Z)\neq0$) \\
  2:~~~~~~Detect $P_{e}$ \\
  3:~~~~~~\bf{if}~($\texttt{mod}(s,T)=0$) \\
  4:~~~~~~~~~~Update $k=k+1$ \\
  5:~~~~~~~~~~Compute $S_{k}$ with (\ref{index}) \\
  6:~~~~~~~~~~\bf{if}~($S_{k}\geq S_{k-1}$) or ($s=0$) \\
  7:~~~~~~~~~~~~~~$S_{k}\leftarrow S_{k-1}$ \\
  8:~~~~~~~~~~~~~~Detect $P_b$, $V$ and $I$ \\
  9:~~~~~~~~~~~~~~Run the JEA for $J(Z)$ \\
  10:~~~~~~~~~~\bf{end if} \\
  11:~~~~~~\bf{end if} \\
  12:~~~~~Update $Z$ with (\ref{ode}) and (\ref{control}) \\
  13:~~~~~Update $s=s+1$ \\
  14:~~~~~Compute $H_s(Z)$ with (\ref{obj}) \\
  15:~\bf{end while} \\ \hline
\end{tabular}
\end{center}
\end{table}

Table \ref{cca} describes the procedure of implementing the cooperative control algorithm of TCSC in practice. First of all, PMUs detect the transmission power $P_e$ on branches and send the data to the control center. For every $T$ time steps (i.e., $\texttt{mod}(s,T)=0$), Performance Index $S_k$ is computed with (\ref{index}). If $S_k$ does not decrease compared with the previous value $S_{k-1}$, the Jacobian matrix $J(Z)$ is updated with the Jacobian Estimation Algorithm (JEA) in Table~\ref{ejm}. Afterwards, TCSCs adjust the branch impedance according to the cooperative control law (\ref{ode}) and (\ref{control}). Finally, the above iteration process is repeated until the objective function reaches zero. The cooperative control algorithm in Table~\ref{cca} allows us to gradually decrease the performance index (\ref{index}) by selecting the appropriate Jacobian matrix $J(Z)$.
\begin{prop}\label{pro_cca}
Cooperative control algorithm of TCSC in Table \ref{cca} ensures the monotonous convergence of Performance Index (\ref{index}).
\end{prop}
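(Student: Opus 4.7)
The plan is to establish the monotone non-increasing property $S_{k+1}\leq S_k$ for every $k\geq 1$ by a case split on what happens between intervals $I(k)$ and $I(k{+}1)$, and then invoke the monotone convergence theorem. The recorded sequence $\{S_k\}$ is bounded below by $0$ (since $H(Z)\geq 0$ by definition of (\ref{obj})), so monotone non-increasingness is the only missing ingredient.

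First, I would untangle the bookkeeping in Table \ref{cca}. The value stored in $S_k$ after lines 5--10 have executed is the natural maximum $\max_{i\in I(k)} H_i(Z)$ whenever this quantity strictly decreased from $S_{k-1}$, and is reset to $S_{k-1}$ otherwise. So by construction, after the \texttt{if} block in line 6 closes, the recorded sequence already satisfies $S_k\leq S_{k-1}$. It remains to show that this recorded sequence is also the one referred to in (\ref{index}), up to the reset convention.

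Second, I would use Proposition 1 to handle the case where the Jacobian was (re)estimated at the start of interval $I(k{+}1)$. With a fresh $J(Z)$, the control law (\ref{control}) applied through (\ref{ode}) yields $dH(Z)/dt\leq 0$ by Proposition 1, so $H_i(Z)$ is non-increasing in $i$ over the length-$T$ interval $I(k{+}1)$. Hence the maximum $\max_{i\in I(k+1)} H_i(Z)$ is attained at the left endpoint $i=kT+1$, and I would chain
\begin{equation*}
S_{k+1} \;=\; H_{kT+1}(Z)\;\leq\; H_{kT}(Z)\;\leq\; \max_{i\in I(k)} H_i(Z) \;=\; S_k,
\end{equation*}
using once more the non-increase of $H$ within $I(k)$. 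If on the other hand the Jacobian was \emph{not} re-estimated, then necessarily $S_k < S_{k-1}$ held, so the previous Jacobian was still "good" in the sense that it had been producing descent; I would argue that the same descent persists for one more interval, again giving $S_{k+1}\leq S_k$, or, failing that, rely on the overwrite in line 7 to force $S_{k+1}\leq S_k$ by definition.

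Finally, combining the two cases yields a non-increasing sequence $\{S_k\}\subset[0,\infty)$, whose convergence follows from the monotone convergence theorem. The main obstacle I anticipate is bridging the continuous-time analysis of Proposition 1 with the discrete time-stepping in Table \ref{cca}: strictly speaking, Proposition 1 proves $dH/dt\leq 0$ along the ODE (\ref{ode}), whereas the algorithm takes discrete updates using the \emph{estimated} Jacobian from (\ref{ji})--(\ref{jnin}) rather than the exact $J(Z)$. To make the descent step-by-step I would need the time step and the perturbation $\lambda$ in the JEA to be small enough that (a) the estimated Jacobian is close to $J(Z)$, and (b) the Euler-type update mimics the continuous flow; the rest is bookkeeping around the max-over-interval definition of $S_k$.
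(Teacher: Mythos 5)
Your proposal is correct, and its logical skeleton --- a non-increasing sequence bounded below by $0$ converges --- is exactly the paper's argument; but the paper reaches monotonicity far more cheaply than you do. The paper's entire proof is the observation that line 7 of Table \ref{cca} overwrites $S_k$ with $S_{k-1}$ whenever $S_k\geq S_{k-1}$, so the recorded sequence satisfies $S_{k+1}\leq S_k$ \emph{by construction}, and convergence to $\inf_{k}S_k$ then follows from $S_k\geq0$. That is precisely your fallback branch (``rely on the overwrite in line 7''), and it is all the statement requires. Everything else in your write-up --- the case split on whether the Jacobian was re-estimated, the appeal to Proposition 1 to get $dH/dt\leq0$ over an interval, and the claim that descent ``persists for one more interval'' under a stale Jacobian --- is an attempt to prove the stronger and more interesting fact that the \emph{unclipped} index $\max_{i\in I(k)}H_i(Z)$ genuinely decreases. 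As you yourself note, that stronger claim does not follow from Proposition 1 without additional work (the algorithm uses the finite-difference Jacobian from (\ref{ji})--(\ref{jnin}) and a discrete Euler step rather than the exact continuous flow), and the ``persists for one more interval'' step has no justification at all; the paper simply does not attempt any of this, so its proposition is in effect a tautology about the clipped bookkeeping variable. Your observation that after a reset the recorded $S_k$ is no longer literally the quantity defined in (\ref{index}) is a genuine subtlety the paper glosses over: what is actually proved is convergence of the monotone envelope $\min_{j\leq k}\max_{i\in I(j)}H_i(Z)$, which still upper-bounds $H_i(Z)$ on $I(k)$ as used in the Remark.
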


\begin{proof}
Cooperative control algorithm in Table \ref{cca} allows us to obtain a sequence $\{S_k\}_{k=1}^{\infty}$ with the constraint $S_{k+1}\leq S_k$. It follows from $S_k\geq0$, $\forall k\in \mathbb{Z}^+$ that $S_k$ monotonously converges to $S^*=\inf_{k\in \mathbb{Z}^+}S_k$ as $k\rightarrow+\infty$. The proof is thus completed.
\end{proof}

\begin{remark}
When the sequence $\{S_k\}_{k=1}^{N}$ is available, it is feasible to obtain the upper bound of Objective Function (\ref{obj}) as follows:
$$
0\leq H_i(Z)\leq \max_{i\in I(k)} H_i(Z)=S_k\leq S_N, \quad i\in I(k), \quad k\geq N.
$$
Moreover, $H_i(Z)$ converges to $0$ as $S_k\rightarrow0$, $k\rightarrow+\infty$.
\end{remark}

\section{NUMERICAL SIMULATIONS}\label{sec:sim}
\begin{figure}
\scalebox{0.052}[0.052]{\includegraphics{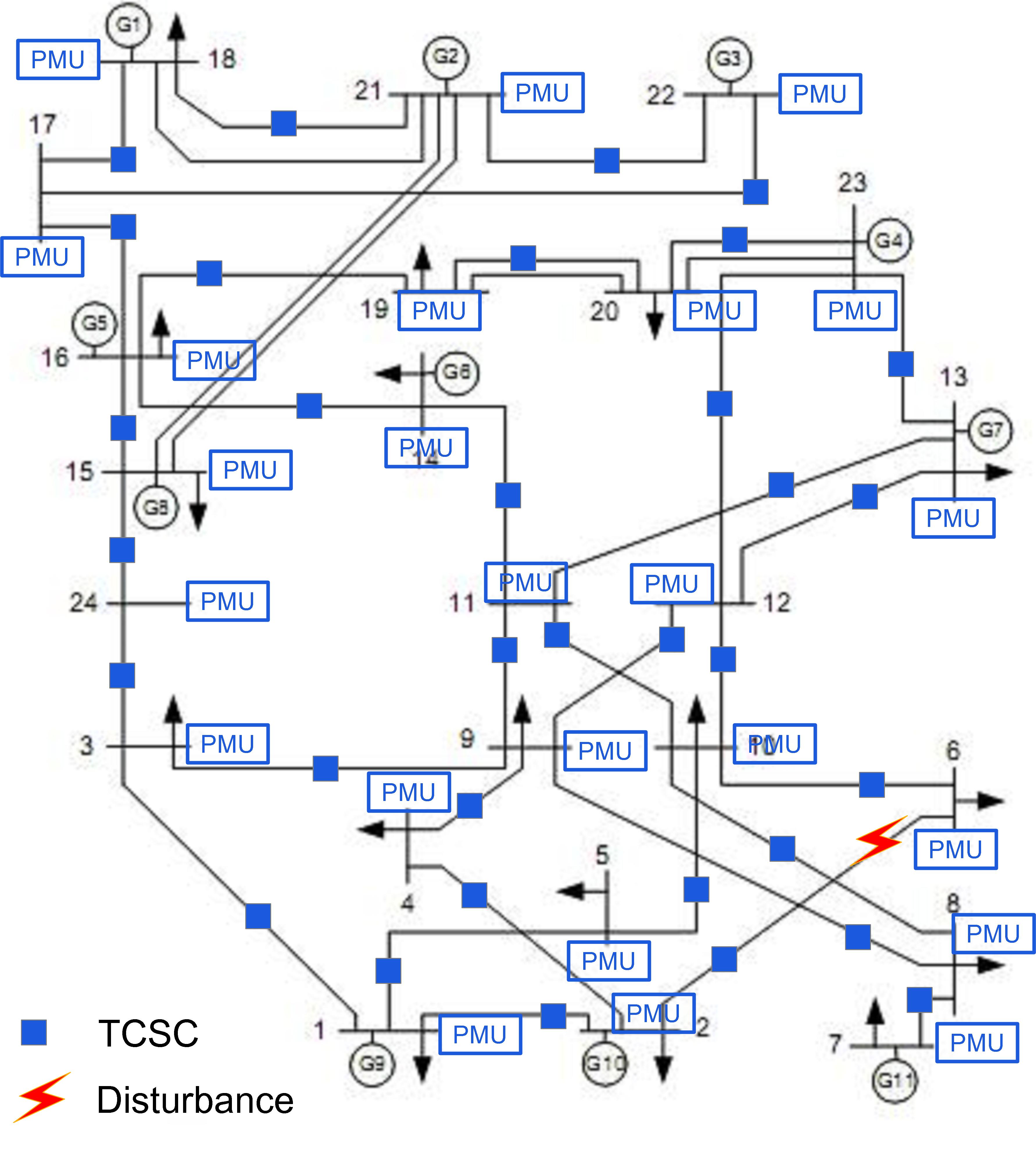}}\centering
\caption{\label{bus24} IEEE 24 Bus System with FACTS devices and PMUs.}
\end{figure}
In this section, numerical simulations are conducted to validate the proposed cooperative control algorithm on IEEE 24 Bus System with time-varying loads (see Fig.~\ref{bus24}). The parameters are selected as follows: the control gain $c=0.02$, the coefficient for reactive power mismatch in objective function $\epsilon=0.2$, the disturbance magnitude $\lambda=10^{-6}$ and $T=100$. For simplicity, Euler method is adopted to solve the differential equation (\ref{ode}) with the step size $0.01$ and the total time steps $10^4$. In addition, the lower bound and upper bound of branch resistance and reactance are $0.5$ times and $4$ times larger than the magnitude of the original value, respectively. Per unit systems are adopted with the base vale of power $100$ MVA in the simulation. The initial contingency is added on Branch $5$ connecting Bus $2$ and Bus $6$ (i.e., the red lightning in Fig.~\ref{bus24}), which enables the branch reactance to increase to $0.6$ and leads to the malfunction of TCSC on Branch $5$.
For simplicity, it is assumed that the injected bus power is subject to the disturbances, which satisfy the normal distribution with mean value of $0$ and standard deviation of $1$. The desired transmission power $\sigma$ is specified as the transmission power in the normal condition before the initial contingency.  Moreover, the function ``runpf" in Matpower is employed to solve the AC power flow equation and obtain the transmission power on each branch \cite{zim11}.

\begin{figure}
\scalebox{0.6}[0.6]{\includegraphics{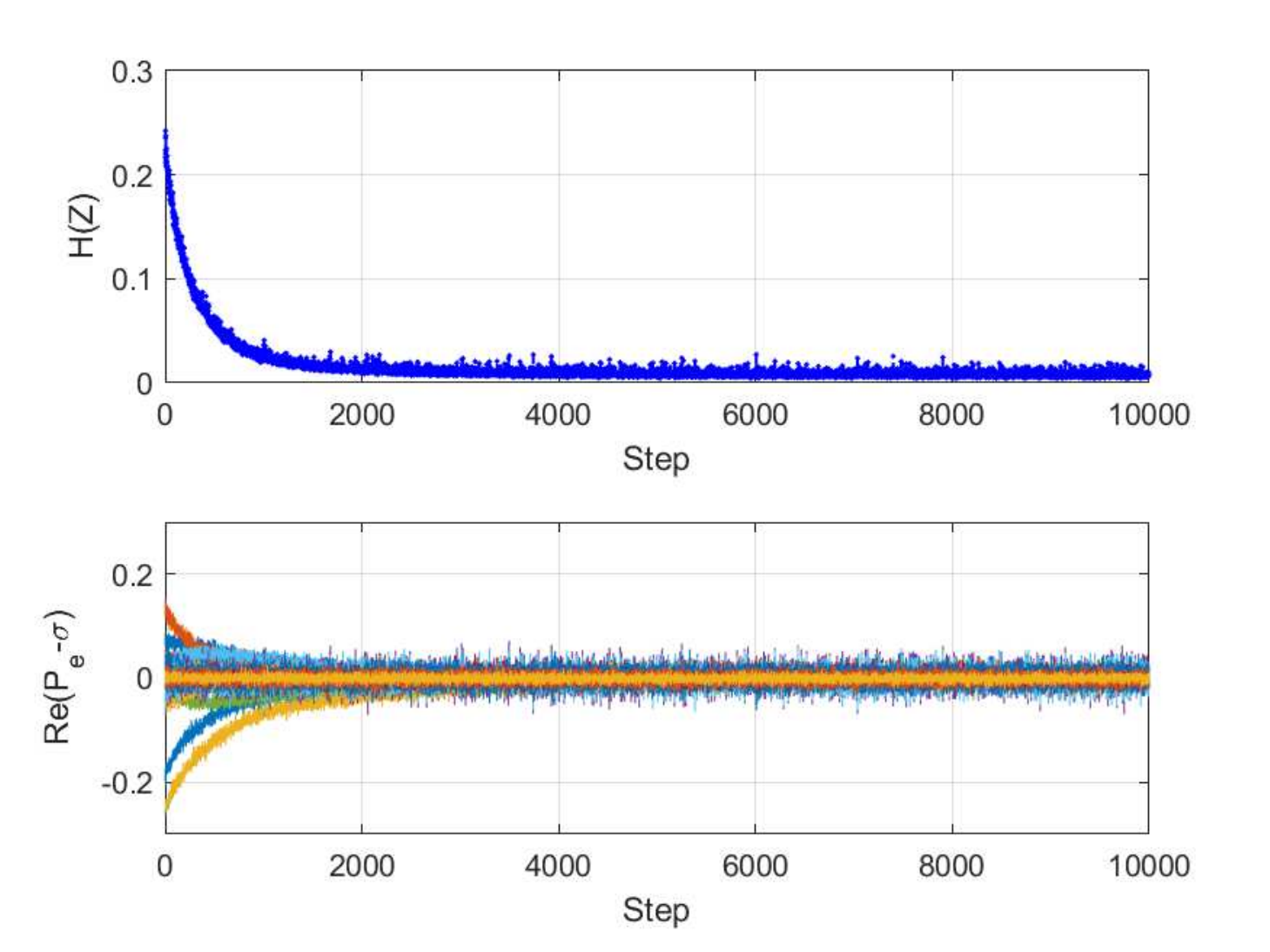}}\centering
\caption{\label{hpe} Evolution of objective function (upper panel) and mismatch of active power on each branch (lower panel).}
\end{figure}

\begin{figure}
\scalebox{0.6}[0.6]{\includegraphics{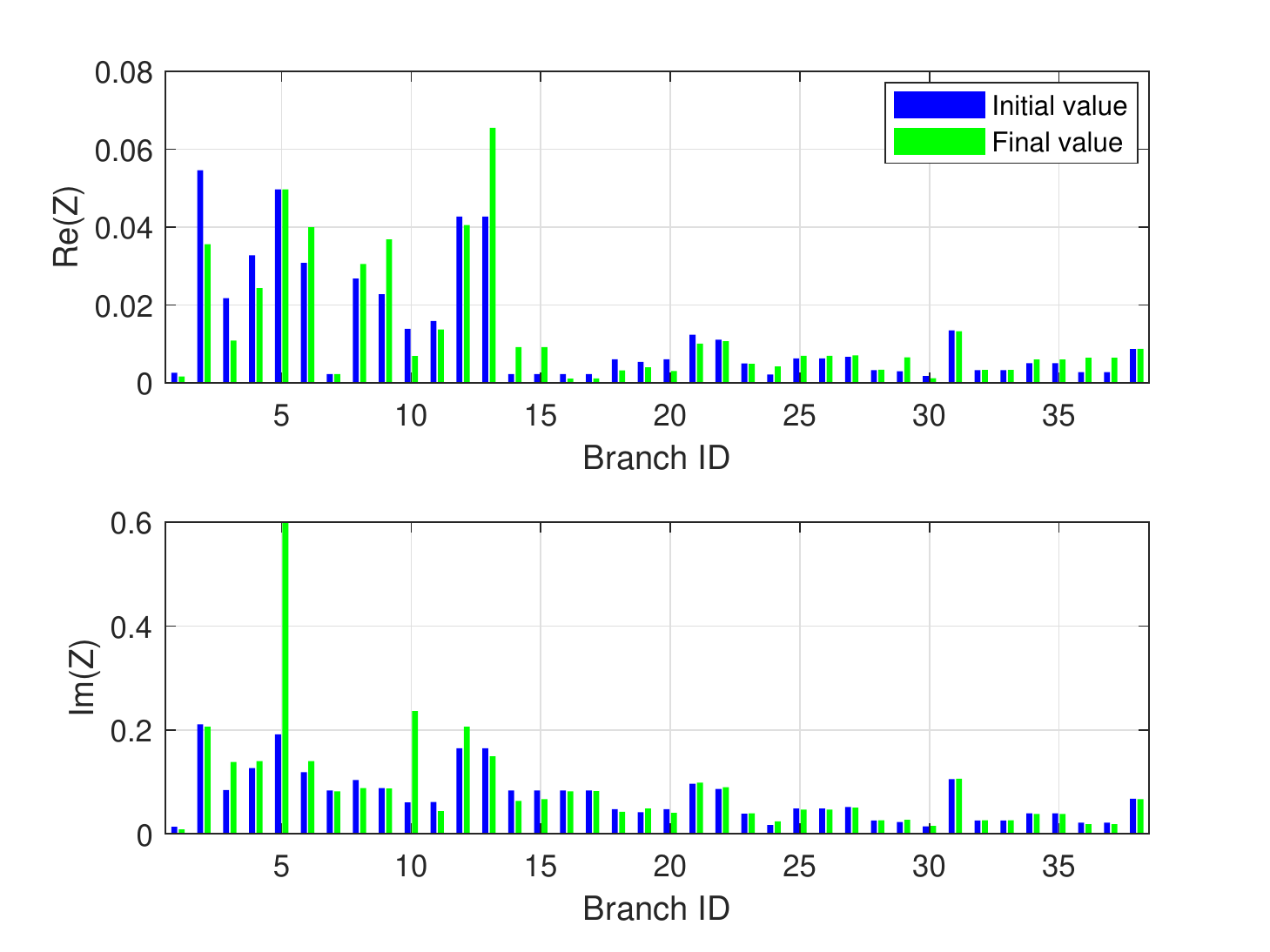}}\centering
\caption{\label{imp} Distribution of branch resistance (upper panel) and branch reactance (lower panel) at the initial step and the final step of numerical simulations.}
\end{figure}

\begin{figure}
\scalebox{0.6}[0.6]{\includegraphics{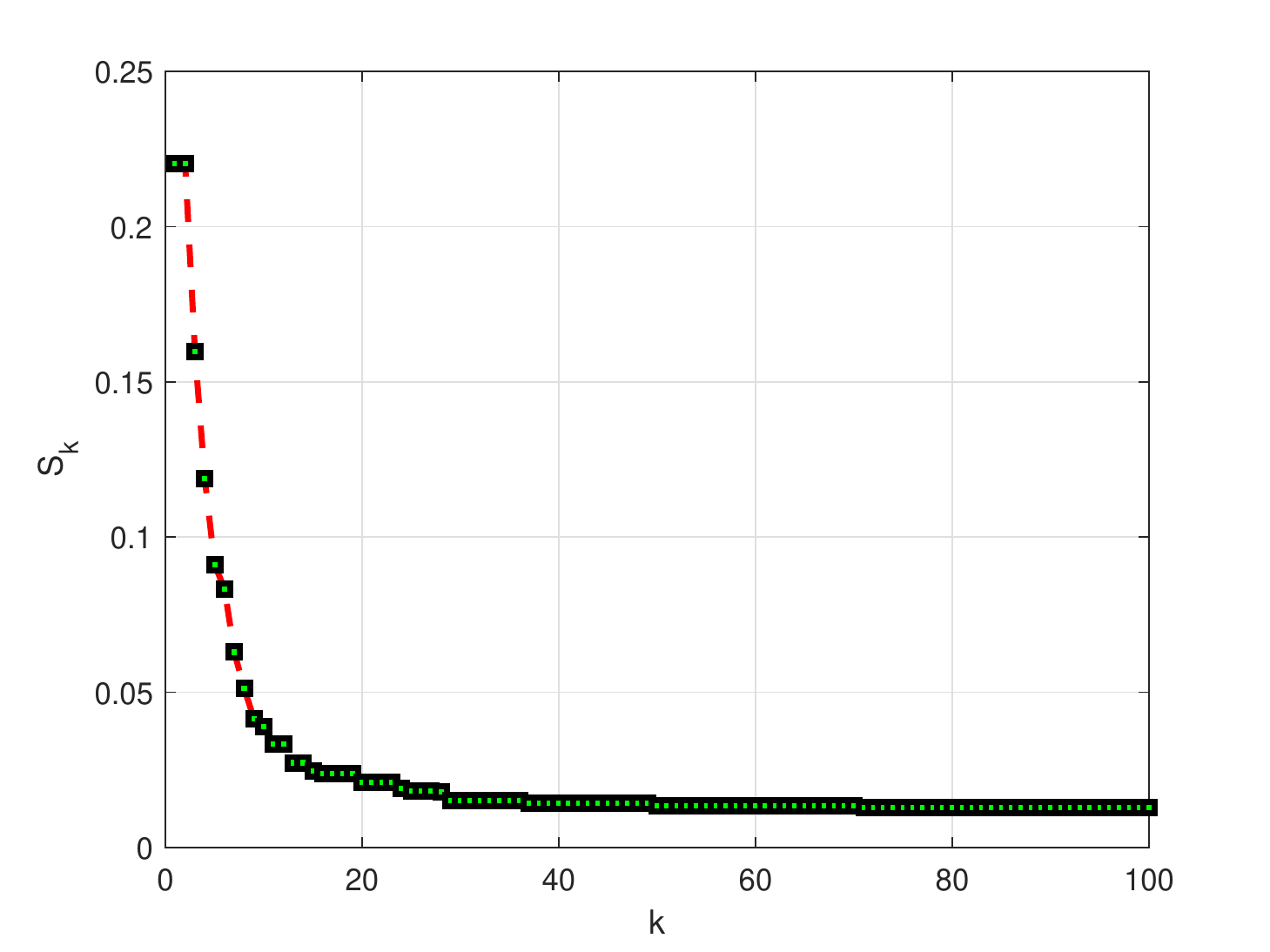}}\centering
\caption{\label{sk} Evolution of the performance index $S_k$.}
\end{figure}

The upper panel in Figure \ref{hpe} demonstrates that the objective function $H(Z)$ monotonously decreases from the initial value of $0.22$ to the final value of $0.006$ after $10^4$ time steps. The lower panel shows the trajectories of mismatch between the actual transmission power and the desired transmission power on each branch. It is observed that all the above trajectories gradually converge to zero with fluctuations as the evolution step increases. Thus, the two panels in Figure \ref{hpe} demonstrate the effectiveness of the cooperative control algorithm of TCSC to relieve the system stresses. Figure \ref{imp} presents the comparison of branch impedance (branch resistance in the upper panel and branch reactance in the lower panel) at the initial step and the final step in the numerical simulation using the cooperative control algorithm. Due to the malfunction of TCSC on Branch $5$, its branch resistance remains unchanged in the simulation. Specifically, the resistance values increase remarkably on Branch 6, Branch 8, Branch 9, Branch 13, Branch 14 and Branch 15, as opposed to the visible decrease of resistance value on Branch 2, Branch 3, Branch 4 and Branch 10.
In contrast, the resistance values fluctuate slightly on all other branches. In terms of branch reactance, there are no significant changes for all branches except for the remarkable increase on Branch $3$, Branch $5$ (due to the initial contingency), Branch $10$ and Branch $12$. Finally, Figure \ref{sk} shows the monotonous decrease of $S_k$ from the initial value of $0.22$ to the final value of $0.013$, which partially confirms the conclusion of Proposition \ref{pro_cca}. The total number of $k$ is $100$ since $S_k$ is computed for every $T=100$ steps with the total steps of $10^4$. In addition, the Jacobian matrix $J(Z)$ is updated for $79$ times in the simulation.

\section{CONCLUSIONS}\label{sec:con}
In this paper, we proposed a cooperative control algorithm for TCSC to relieve the stress of cyber-physical power systems. By slightly disturbing the branch impedance of power system, the control algorithm is able to estimate the elements of Jacobian matrix and cooperatively regulate the TCSC on each branch for the effective relief of power system stress. The proposed approach was validated by simulation results on IEEE $24$ Bus Systems with time-varying loads. Future work may include the optimal deployment of limited TCSC agents on branches and the estimation of Jacobian elements by analyzing the real PMU data without disturbing the branch impedance. In addition, we also plan to develop the distributed control algorithm for FACTS devices to enhance the resilience of cyber-physical power systems.

\section*{ACKNOWLEDGMENTS}
This work is partially supported by the Future Resilient Systems Project at the Singapore-ETH Centre (SEC), which is funded by the National Research Foundation of Singapore (NRF) under its Campus for Research Excellence and Technological Enterprise (CREATE) program. It is also partially supported by Ministry of Education of Singapore under contract MOE2016-T2-1-119.

\end{document}